\documentclass[11pt, onecolumn]{article}

\setlength{\topmargin}{-0.7cm}
\setlength{\oddsidemargin}{0cm}
\setlength{\evensidemargin}{0cm}
\setlength{\textheight}{229mm}
\setlength{\textwidth}{164mm}

\newtheorem{theorem}{Theorem}

\newtheorem{obs}{Observation}

\newtheorem{prop}{Proposition}

\def\QED{\ensuremath{{\square}}}
\def\markatright#1{\leavevmode\unskip\nobreak\quad\hspace*{\fill}{#1}}
\newenvironment{proof}
  {\begin{trivlist}\item[\hskip\labelsep{\bf Proof.}]}
  {\markatright{\QED}\end{trivlist}}

\usepackage{times,graphics,epsfig,amsmath,shapepar,fancyhdr,amssymb}
\usepackage{verbatim}

\title{Morphing of Triangular Meshes in Shape Space
\thanks{Research supported in part by HPCVL and NSERC.}}

\author{Stefanie Wuhrer\thanks{Carleton University, Ottawa, Canada, and National Research Council of Canada, Ottawa, Canada, \textit{swuhrer@scs.carleton.ca}.}
\and Prosenjit Bose\thanks{Carleton University, Ottawa, Canada, \textit{jit@scs.carleton.ca}.}
\and Chang Shu\thanks{National Research Council of Canada, Ottawa, Canada, \textit{chang.shu@nrc-cnrc.gc.ca}.}
\and Joseph O'Rourke\thanks{Smith College, Northampton, USA, \textit{orourke@cs.smith.edu}.}
\and Alan Brunton\thanks{University of Ottawa, Ottawa, Canada, and National Research Council of Canada, Ottawa, Canada, \textit{abrunton@site.uottawa.ca}.}}

\date{\today}

\begin{document}

\maketitle

\begin{abstract}
We present a novel approach to morph between two isometric poses of the same non-rigid object given as triangular meshes. We model the morphs as linear interpolations in a suitable shape space $\mathcal{S}$. For triangulated $3D$ polygons, we prove that interpolating linearly in this shape space corresponds to the most isometric morph in $\mathbb{R}^3$. We then extend this shape space to arbitrary triangulations in $3D$ using a heuristic approach and show the practical use of the approach using experiments. Furthermore, we discuss a modified shape space that is useful for isometric skeleton morphing. All of the newly presented approaches solve the morphing problem without the need to solve a minimization problem.
\end{abstract}
%-------------------------------------------------------------------------
\section{Introduction}

Given two isometric poses of the same non-rigid object as triangular meshes $S^{(0)}$ and $S^{(1)}$ with known point-to-point correspondences, we aim to find a smooth isometric deformation between the poses. Interpolating smoothly between two given poses is called \textit{morphing}. We achieve this by finding shortest paths in an appropriate shape space similar to the approach by Kilian et al.~\cite{kilian_2007_gmss}. We propose a novel shape space.

A deformation of a shape represented by a triangular mesh is isometric if and only if all edge lengths are preserved during the deformation~\cite{kilian_2007_gmss}. This property holds because each face of the mesh is a triangle. A deformation of a shape is called \textit{most isometric} if the sum of the squared differences between the corresponding edge lengths of the two shapes is minimized. In this paper, we examine isometric deformations of general \textit{triangular meshes} in $3D$ and of \textit{triangulated $3D$ polygons}, which are triangular meshes with no interior vertices. We introduce a new shape space $\mathcal{S}$ for triangulated $3D$ polygons that has the property that interpolating linearly in shape space corresponds to the most isometric morph in $\mathbb{R}^3$. We then extend this shape space to arbitrary triangulations in $3D$ using a heuristic approach. Furthermore, we discuss a modification of the shape space that is useful for isometric skeleton morphing.

%-------------------------------------------------------------------------
\section{Related Work}

Computing a smooth morph from one pose of a shape in two or three dimensions to another pose of the same shape has numerous applications. For example in computer graphics and computer animation this problem has received considerable attention~\cite{lazarus_verroust_98_morph_survey, alt_giubas_96_survey_shapes}. A recent survey on this topic was written by Alexa~\cite{alexa_02_survey}. We only review the work most relevant to this paper.

Before considering morphing three-dimensional mesh models, the two-dimensional version of the problem has received considerable attention. In the case where the input is sampled over a regular domain, this problem is called \textit{image morphing}. Image-morphing is widely studied and surveyed by Wolberg~\cite{wolberg_image_morph_survey}. In the case where the input is sampled over an irregular domain, the problem is to interpolate between two simple polygons in the plane. We only review work that makes use of intrinsic representations of the polygons. Sederberg et al.~\cite{sederberg_gao_wang_mu_93_2dmorph} propose to interpolate an intrinsic representation of two-dimensional polygons, namely the edge lengths and interior angles of the polygon. Surazhsky and Gotsman~\cite{surazhsky_gotsman_03_morph_2D} morph by computing mean value barycentric coordinates based on an intrinsic representation of triangulated polygons. This method is guaranteed to be intersection free. Iben et al.~\cite{iben_obrien_demaine_06_refolding_polygons} morph planar polygons while guaranteeing that no self-intersections occur using an approach based on energy minimization. This approach can be constrained to be as isometric as possible.

Sun et al.~\cite{sun_wang_chin_97_morph} morph between three-dimensional manifold meshes. They extend the approach by Sederberg et al.~\cite{sederberg_gao_wang_mu_93_2dmorph} to three dimensions by extending the intrinsic representation to polyhedra. However, the developed methods are computationally expensive~\cite{alexa_02_survey}. We propose a different intrinsic representation of triangular meshes that offers the advantage of producing the most isometric morph and of being efficient.

Sorkine and Alexa~\cite{sorkine_alexa_07_rigid_deformation} propose an algorithm to deform a surface based on a given triangular surface and updated positions of few feature points. The surface is modeled as a covering by overlapping cells. The deformation aims to deform each cell as rigidly as possible. The overlap is necessary to avoid stretching along cell boundaries. The deformation is based on minimizing a global non-linear energy function that is simple to implement. The energy is guaranteed to converge. However, since the energy function may have multiple minima, the algorithm is not guaranteed to find the global minimum. The approach tends to preserve the edge lengths of the triangular mesh. This property depends upon finding a global minimum of the energy function. One cannot guarantee to find this global minimum.

Recently, Zhou et al.~\cite{zhou_huang_snyder_liu_bao_guo_shum_07_volume_preserve} proposed a new method to deform triangular meshes based on the Laplacian of a graph representing the volume of the triangular mesh. The method is shown to prevent volumetric details to change unnaturally.

Recently, Kilian et al.~\cite{kilian_2007_gmss} used shape space representations to guide morphs and other more general deformations between shapes represented as triangular meshes. Each shape is represented by a point in a high-dimensional shape space and deformations are modeled as geodesics in shape space. The geodesic paths in shape space are found using an energy-minimization approach. Before Kilian et al.~\cite{kilian_2007_gmss} presented the use of a shape space for shape deformation and exploration of triangular meshes, shape space representations were developed to deform shapes in different representations. Cheng et al.~\cite{chen_edelsbrunner_fu_ShapeSpace} proposed an approach that deforms shapes given in skin representation, which is a union of spheres that are connected via blending patches of hyperboloids, with the help of a suitable shape space. Furthermore, algorithms for deforming curves with the help of shape space representations were proposed by Younes~\cite{younes_99_deformation} and Klassen et al.~\cite{klassen_srivastava_mio_04_planar_shape_spaces}. Eckstein et al.~\cite{eckstein_pons_tong_kuo_desbrun_07_surface_flow} propose a generalized gradient descent method similar to the approach by Kilian et al. that can be applied to deform triangular meshes. All of these approaches depend on solving a highly non-linear optimization problem with many unknown variables using numerical solvers. It is therefore not guaranteed that the globally optimal solution is found.

In this paper, we propose a novel shape space with the property that interpolating linearly in shape space approximates the most isometric morph in $\mathbb{R}^3$. For triangulated $3D$ polygons, we prove that the linear interpolation in shape space corresponds exactly to the most isometric morph in $\mathbb{R}^3$. For arbitrary triangulated manifolds in $3D$, we provide a heuristic approach to find the morph. This heuristics is an extension of the approach developed for $3D$ polygons. The proposed methods do not require solving a minimization problem.

%-------------------------------------------------------------------------
\section{Theory of Shape Space for Triangulated $3D$ Polygons}
\label{polygon}

This section introduces a novel shape space for triangulated $3D$ polygons with the property that interpolating linearly in shape space corresponds to the most isometric morph in $\mathbb{R}^3$. The dimensionality of the shape space is linear in the number of vertices of the deformed polygon.

We start with two triangulated $3D$ polygons $P^{(0)}$ and $P^{(1)}$ corresponding to two near-isometric poses of the same non-rigid object. We assume that the point-to-point correspondence of the vertices $P^{(0)}$ and $P^{(1)}$ are known. Furthermore, we assume that both $P^{(0)}$ and $P^{(1)}$ share the same underlying mesh structure $M$. Hence, we know the mesh structure $M$ with two sets of ordered vertex coordinates $V^{(0)}$ and $V^{(1)}$ in $\mathbb{R}^3$, where $M$ is an outer-planar graph. We will show that we can represent $P^{(0)}$ and $P^{(1)}$ as points $p^{(0)}$ and $p^{(1)}$ in a shape space $\mathcal{S}$, such that each point $p^{(t)}$ that is a linear interpolation between $p^{(0)}$ and $p^{(1)}$ corresponds to a triangular mesh $P^{(t)}$ isometric to $P^{(0)}$ and $P^{(1)}$ in $\mathbb{R}^3$.

As we know the point-to-point correspondence of the vertices $P^{(0)}$ and $P^{(1)}$, we can find the best rigid alignment of the two shapes by solving an overdetermined linear system of equations and by modifying the solution to ensure a valid rotation matrix using an approach similar to the one used for camera calibration~\cite{trucco_verri_book}.

Let $M$ consist of $n$ vertices. As $M$ is a triangulation of a $3D$ polygon with $n$ vertices, $M$ has $2n-3$ edges and $n-2$ triangles. We assign an arbitrary but fixed order on the vertices, edges, and faces of $M$. The shape space $\mathcal{S}$ is defined as follows. The first $3$ coordinates of a point $p \in \mathcal{S}$ correspond to the coordinates of the first vertex $v$ in $M$. Coordinates $4$ and $5$ of $p$ correspond to the direction of the first edge of $M$ incident to $v$ in spherical coordinates. The next $2n-3$ coordinates of $p$ are the lengths of the edges in $M$ in order. The final $2(n-2)$ coordinates of $p$ describe the outer normal directions of the triangles in $M$ in spherical coordinates, in order. Hence, the shape space $\mathcal{S}$ has dimension $5+2n-3+2(n-2) = 4n-2=O(n)$. 

In the following, we prove that interpolating linearly between $P^{(0)}$ and $P^{(1)}$ in shape space yields the most isometric morph. To interpolate linearly in shape space, we interpolate the edge lengths by a simple linear interpolation. That is, $p_k^{(t)} = t p_k^{(0)} + (1-t) p_k^{(1)}$, where $p_k^{(x)}$ is the $k$th coordinate of $p^{(x)}$. The normal vectors are interpolated using geometric spherical linear interpolation (SLERP)~\cite{shoemake_85_slerp}. That is, $p_k^{(t)} = \frac{\sin(1-t)\Theta}{\sin \Theta} p_k^{(0)} + \frac{\sin t \Theta}{\sin \Theta} p_k^{(1)}$, where $\Theta$ is the angle between the two directions that are interpolated.

Note that the relative rigid alignment of $P^{(0)}$ and $P^{(1)}$ in $\mathbb{R}^3$ has an influence on the linear interpolation. That is, the interpolating shape space point varies as the relative rigid alignment of $P^{(0)}$ and $P^{(1)}$ in $\mathbb{R}^3$ changes. The change occurs because the angles between the normal vectors of $p^{(t)}$ change as a result of the rigid transformation. This is the reason we choose to find the best rigid alignment of $P^{(0)}$ and $P^{(1)}$ before transforming the polygons into $\mathcal{S}$.

To study interpolation in shape space, we make use of the \textit{dual graph $D(M)$} of $M$. The dual graph $D(M)$ has a node for each triangle of $M$. We denote the dual node corresponding to face $f$ of $M$ by $D(f)$. Two nodes of $D(M)$ are joined by an arc if the two corresponding triangles in $M$ share an edge. We denote the dual arc corresponding to an edge $e$ of $M$ by $D(e)$. Note that because $M$ meshes a $3D$ polygon, it is an outer-planar triangular graph and so the dual graph of $M$ is a binary tree. An example of a mesh $M$ with its dual graph $D(M)$ is shown in Figure~\ref{dual_graph}.

\begin{figure}[htb]
\centering
\includegraphics[width=5.0cm]{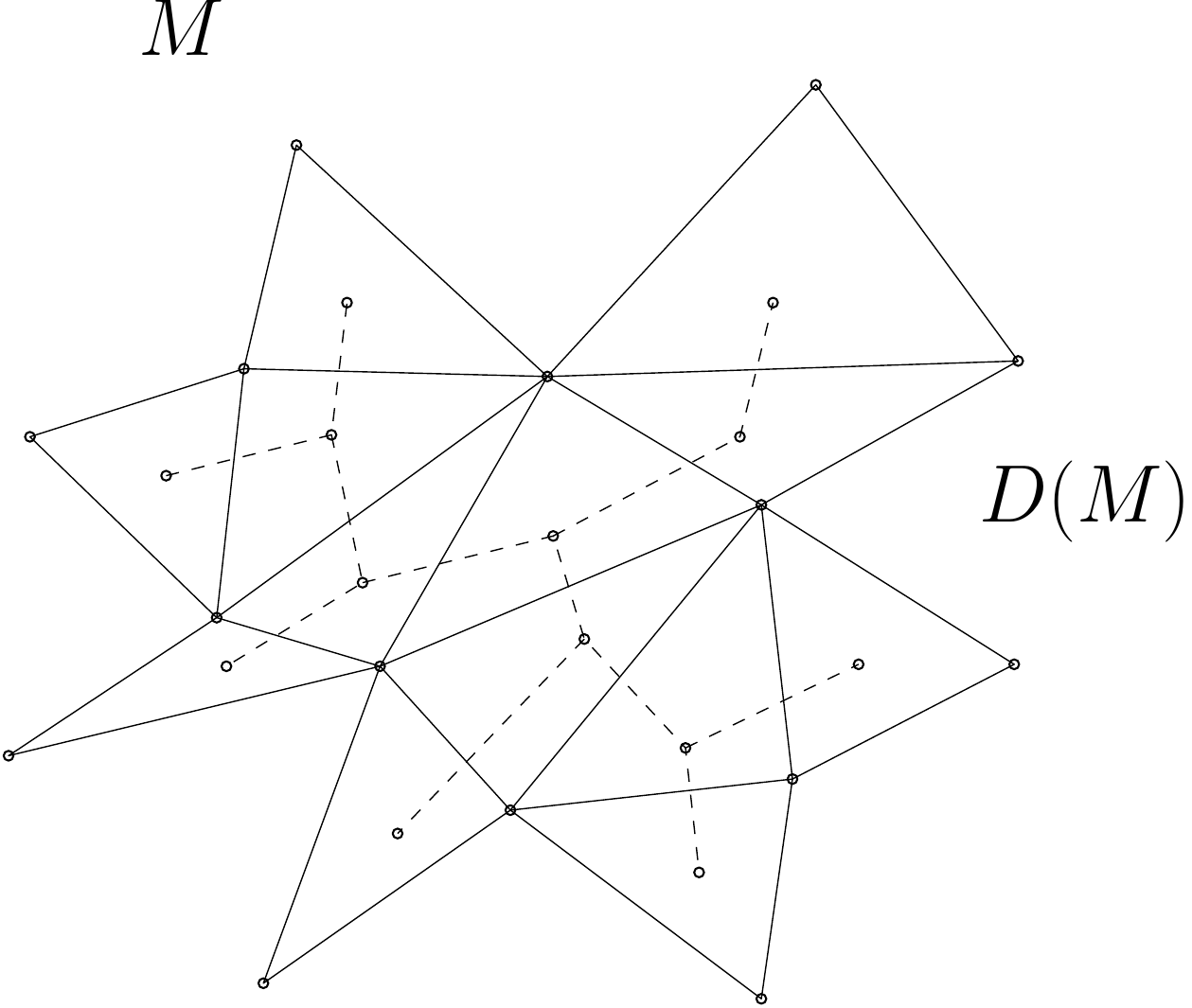}
\caption{\textit{A mesh $M$ with its dual graph $D(M)$.}}
\label{dual_graph}
\end{figure}

\begin{theorem}
Let $M$ be the underlying mesh structure of the triangulated $3D$ polygons $P^{(0)}$ and $P^{(1)}$. The linear interpolation $p^{(t)}$ between $p^{(0)}$ and $p^{(1)}$ in shape space $\mathcal{S}$ for $0\leq t\leq 1$ has the following properties:
\begin{enumerate}
\item The mesh $P^{(t)} \in \mathbb{R}^3$ that corresponds to $p^{(t)} \in \mathcal{S}$ is uniquely defined and has the underlying mesh structure $M$. We can compute this mesh using a traversal of the binary tree $D(M)$ in $O(n)$ time.
\item If $P^{(0)}$ and $P^{(1)}$ are isometric, then $P^{(t)}$ is isometric to $P^{(0)}$ and $P^{(1)}$. If $P^{(0)}$ and $P^{(1)}$ are not perfectly isometric, then each edge length of $P^{(t)}$ linearly interpolates between the corresponding edge lengths of $P^{(0)}$ and $P^{(1)}$.
\item The coordinates of the vertices of $P^{(t)}$ are a continuous function of $t$.
\end{enumerate}
\label{theorem_polygon}
\end{theorem}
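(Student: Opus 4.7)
The plan is to reconstruct $P^{(t)}$ by traversing the dual binary tree $D(M)$ starting from a root chosen to be a triangle $f_0$ that contains both the distinguished first vertex $v$ and the designated first edge incident to $v$. All three claims will follow from an inductive analysis of this traversal, with the dimension count $4n-2$ matching exactly the data needed to carry out the reconstruction.

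For part 1, the base case places $f_0$. The first three shape-space coordinates give the position of $v$, and the next two spherical coordinates give a unit direction; scaling by the stored length of the first edge places the other endpoint of that edge. The two stored outer-normal spherical coordinates of $f_0$ specify a unit normal $\vec n(f_0)$ which, together with the two already-placed vertices, fixes the plane containing $f_0$. The third vertex of $f_0$ then lies in that plane at the two remaining stored distances from the placed vertices, which pins it down uniquely once we fix the convention that the vertex ordering of each face is consistent with its outer normal (e.g., counterclockwise as viewed from the normal side). For the inductive step, whenever a triangle $f$ is visited from its parent in $D(M)$, two of its vertices lie on the shared dual-edge and are already placed; the plane through this edge with normal $\vec n(f)$ is determined, and the third vertex is again fixed by two stored edge lengths and the same orientation convention. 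Since each dual node is processed once in $O(1)$ work, the total time is $O(n)$.

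For part 2, the shape space stores the $2n-3$ edge lengths directly as coordinates, and by definition of the interpolation these coordinates are interpolated linearly. Hence every edge length of $P^{(t)}$ is, by construction, a linear interpolation of the corresponding edge lengths of $P^{(0)}$ and $P^{(1)}$. When the two input meshes are isometric the stored lengths coincide, so the interpolated lengths equal the common values and $P^{(t)}$ is isometric to both. For part 3, every step of the reconstruction in part 1 is continuous in the shape-space coordinates at parameter $t$: positional and length coordinates enter linearly; decoding spherical coordinates and SLERP are continuous away from antipodal endpoints (excluded by the near-isometry assumption); and obtaining the third vertex of a triangle from two placed vertices, a containing plane, two positive distances, and an orientation convention is continuous wherever the triangle inequality holds strictly. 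The composition of continuous maps is continuous, so the vertex coordinates of $P^{(t)}$ depend continuously on $t$.

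The main obstacle is the uniqueness argument in part 1: verifying that at every triangle the outer normal together with the orientation convention consistently selects the correct one of the two mirror placements in the shared-edge plane, and that the near-isometry hypothesis rules out degeneracies (antipodal normals under SLERP, violated triangle inequalities, collinear base-case configurations) that would otherwise break uniqueness or continuity. Once this bookkeeping is in hand, the three conclusions fall out together from the single dual-tree induction.
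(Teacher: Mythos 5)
Your reconstruction-by-dual-tree-traversal argument is essentially the paper's own proof: the same base case at the first triangle (position of $v$, direction and length of the first edge, plane from the outer normal, third vertex pinned by the remaining lengths and the counterclockwise orientation convention), the same inductive step across each dual arc, the same $O(n)$ count, and the same treatment of parts 2 and 3. The one substantive point you leave open---that the interpolated edge lengths of each triangle still satisfy the triangle inequality, so that $P^{(t)}$ is a valid mesh---does not require the near-isometry hypothesis you invoke to ``rule out violated triangle inequalities'': since $a^{(t)}=(1-t)a^{(0)}+ta^{(1)}$ (and likewise for $b,c$) is a convex combination, $a^{(t)}+b^{(t)}=(1-t)(a^{(0)}+b^{(0)})+t(a^{(1)}+b^{(1)})\geq(1-t)c^{(0)}+tc^{(1)}=c^{(t)}$, which is exactly how the paper closes this step for any pair of valid input meshes. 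With that observation your ``main obstacle'' largely dissolves, and the remaining bookkeeping you flag (the orientation convention resolving the mirror ambiguity, continuity of each reconstruction step in $t$) matches the published argument.
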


\begin{proof}
\textbf{Part 1:} To prove uniqueness, we start by noting that the first vertex $v$ of $P^{(t)}$ is uniquely determined by the first three coordinates of $p^{(t)}$. The direction of the first edge $e$ of $M$ incident to $v$ is uniquely determined by coordinates $4$ and $5$ of $p^{(t)}$, because each point on the unit sphere determines a unique direction in $\mathbb{R}^3$. The length of each edge of $P^{(t)}$ is uniquely determined by the following $2n-3$ coordinates. Furthermore, the outer normal of each triangle is uniquely determined by the following $2(n-2)$ coordinates, because each point on the unit sphere determines a unique direction in $\mathbb{R}^3$. Hence, the edge $e$ is uniquely determined. For a triangle $f$ containing $e$, we now know the position of two vertices of $f$, the plane containing $f$, and the three lengths of the edges of $f$. Assuming that the normal vectors in shape space represent right-hand rule counterclockwise traversals of each triangle, this uniquely determines the position of the last vertex of $f$. We can now determine the coordinate of each vertex of $P^{(t)}$ uniquely by traversing $D(M)$. We start the traversal of $D(M)$ at $D(f)$. Recall that the coordinates of the vertices of triangle $f$ are known. Hence, when traversing an arc $D(e)$ incident to $D(f)$, we know the vertex coordinates of the shared edge between the two triangles corresponding to endpoints of $D(e)$. Denote the endpoint of $D(e)$ not corresponding to $f$ by $f'$. For $f'$, we now know the position of two vertices of $f'$, the plane containing $f'$, and the three lengths of the edges of $f'$. Hence, we can compute the position of the last vertex of $f'$. Because we now know the coordinates of all the vertices of $f'$, we can traverse all of the arcs in $D(M)$ incident to $D(f')$. In this fashion, we can set all of the vertex coordinates of $P^{(t)}$ by traversing $D(M)$. Because $D(M)$ is a tree, it is cycle-free. Hence, the coordinates of each vertex of $P^{(t)}$ are set exactly once. Because the complexity of $D(M)$ is $O(n)$, the algorithm terminates after $O(n)$ steps.

It remains to prove that $P^{(t)}$ is a valid triangular mesh, that is, that the three edge lengths of each triangle of $P^{(t)}$ satisfy the triangle inequality. We assume that both input meshes were valid triangular meshes. Hence, for any triangle $t$ with edge lengths $a^{(0)}, b^{(0)}, c^{(0)}$ in $P^{(0)}$ and $a^{(1)}, b^{(1)}, c^{(1)}$ in $P^{(1)}$, the following inequalities hold: $$a^{(0)}+b^{(0)}\geq c^{(0)}, b^{(0)}+c^{(0)}\geq a^{(0)}, c^{(0)}+a^{(0)}\geq b^{(0)}$$ and $$a^{(1)}+b^{(1)}\geq c^{(1)}, b^{(1)}+c^{(1)}\geq a^{(1)}, c^{(1)}+a^{(1)}\geq b^{(1)}.$$
In $P^{(t)}$, $a^{(t)}= (1-t)a^{(0)}+ta^{(1)}, b^{(t)}= (1-t)b^{(0)}+tb^{(1)}, c^{(t)}= (1-t)c^{(0)}+tc^{(1)}$ due to the linear interpolation of the end positions. Hence,
$a^{(t)} + b^{(t)} = (1-t)a^{(0)}+ta^{(1)} + (1-t)b^{(0)}+tb^{(1)} = (1-t)(a^{(0)} + b^{(0)}) + t(a^{(1)}+ b^{(1)}) \geq (1-t)c^{(0)} + t c^{(1)}.$ Similarly, we can show that $b^{(t)}+c^{(t)}\geq a^{(t)}$ and $c^{(t)}+a^{(t)}\geq b^{(t)}$. Hence, $P^{(t)}$ is a valid triangular mesh.

\textbf{Part 2:} The edge lengths of $P^{(t)}$ are linear interpolations between the edge lengths of  $P^{(0)}$ and $P^{(1)}$. Hence, the claim follows.

\textbf{Part 3:} When varying $t$ continuously, the point $p^{(t)} \in \mathcal{S}$ varies continuously. Hence, the coordinate of the lengths of all the edges vary continuously. Because a direction $ \left[\sin{v} \cos{u}, \sin{v} \sin{u}, \cos{v}\right]^T $ varies continuously if $u$ and $v$ vary continuously, the normal directions vary continuously. Because all the vertex positions of the mesh $P^{(t)}$ are uniquely determined by continuous functions of those quantities, all vertex positions of $P^{(t)}$ vary continuously.
\end{proof}

Note that we do not need to solve minimization problems to find the shortest path in shape space as in Kilian et al.~\cite{kilian_2007_gmss}. The only computation required to find an intermediate deformation pose is a graph traversal of $D(M)$.

Because $D(M)$ has complexity $O(n)$, we can traverse $D(M)$ in $O(n)$ time. Hence, we can compute intermediate deformation poses in $O(n)$ time each. We denote this the \textit{polygon algorithm} in the following.

%-------------------------------------------------------------------------
\section{Theory of Shape Space for Skeleton Morphing}
A very similar shape space to the one presented in Section~\ref{polygon} can be used to isometrically morph between two topologically equivalent skeletons. Let a \textit{skeleton} in $\mathbb{R}^3$ be a set of joints connected by links arranged in a tree-structure. That is, we can consider a skeleton to be a tree in $\mathbb{R}^3$ consisting of $n$ vertices and $n-1$ edges.

The shape space presented in Section~\ref{polygon} can be simplified to a shape space for skeletons in $\mathbb{R}^3$ with the property that interpolating linearly in shape space corresponds to the most isometric morph in $\mathbb{R}^3$. The dimensionality of the shape space is linear in the number of links of the skeleton.

We start with two skeletons $S^{(0)}$ and $S^{(1)}$ corresponding to two near-isometric poses. We assume that the point-to-point correspondence of $S^{(0)}$ and $S^{(1)}$ are known. Hence, we know the tree structure $T$ with two sets of ordered vertex coordinates $V^{(0)}$ and $V^{(1)}$ in $\mathbb{R}^3$. As before, we first find the best rigid alignment of the two skeletons.

The skeleton shape space $\mathcal{S_S}$ is defined in a similar way as $\mathcal{S}$. We assign an arbitrary but fixed root to $T$ and traverse the edges of $T$ in a depth-first order. We assign an arbitrary order to the edges incident on each vertex of $T$. The first $3$ coordinates of $s$ correspond to the coordinates of the root in $T$. The next $n-1$ coordinates of $s$ are the lengths of the edges in $T$ in depth-first order. The final $2(n-1)$ coordinates of $s$ describe the unit directions of the edges in spherical coordinates in depth-first order. All edges are oriented such that they point away from the root. Note that the shape space $\mathcal{S}$ has dimension $O(n)$. 

Interpolating linearly between points in $\mathcal{S_S}$ is performed the same way as interpolating linearly between points in $\mathcal{S}$. Namely, edge lengths are interpolated linearly and unit directions are interpolated via SLERP. With a technique similar to the proof of Theorem~\ref{theorem_polygon}, we can prove the following theorem. In the proof, we do not need to consider a dual graph, but we can simply traverse the tree $T$ in depth-first order to propagate the information.

\begin{theorem}
Let $T$ be the underlying tree structure of the skeletons $S^{(0)}$ and $S^{(1)}$. The linear interpolation $s^{(t)}$ between $s^{(0)}$ and $s^{(1)}$ in shape space $\mathcal{S_S}$ for $0\leq t\leq 1$ has the following properties:
\begin{enumerate}
\item The skeleton $S^{(t)} \in \mathbb{R}^3$ that corresponds to $s^{(t)} \in \mathcal{S_S}$ is uniquely defined and has the underlying tree structure $T$. We can compute this tree using a depth-first traversal of the tree in $O(n)$ time.
\item If $S^{(0)}$ and $S^{(1)}$ are isometric, then $S^{(t)}$ is isometric to $S^{(0)}$ and $S^{(1)}$. If $S^{(0)}$ and $S^{(1)}$ are not perfectly isometric, then each edge length of $S^{(t)}$ linearly interpolates between the corresponding edge lengths of $S^{(0)}$ and $S^{(1)}$.
\item The coordinates of the vertices of $S^{(t)}$ are a continuous function of $t$.
\end{enumerate}
\label{theorem_skeleton}
\end{theorem}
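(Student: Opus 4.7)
The plan is to imitate the proof of Theorem~\ref{theorem_polygon}, but with a significant simplification: because the skeleton is a tree rather than a triangulation, we can recover vertex coordinates directly by walking along edges, so neither a dual graph nor a triangle-inequality check is required.

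For Part~1, I would first observe that the root vertex is uniquely pinned down by the first three coordinates of $s^{(t)}$. I would then traverse $T$ in depth-first order from the root. At a generic step, suppose the vertex $u$ at the tail of the next edge $e$ has already been placed. The shape-space coordinates supply (i) the length $\ell_e^{(t)}$ of $e$ and (ii) a unit vector $d_e^{(t)} \in \mathbb{R}^3$ obtained from the spherical-coordinate pair via $(\sin v \cos u, \sin v \sin u, \cos v)^T$; each such spherical-coordinate pair determines a unique direction, so $d_e^{(t)}$ is well-defined. Since edges are oriented away from the root, the head of $e$ is forced to be $u + \ell_e^{(t)}\, d_e^{(t)}$, which determines it uniquely. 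The DFS visits each of the $n-1$ edges exactly once and performs $O(1)$ arithmetic per edge, giving the $O(n)$ bound. No validity check analogous to the triangle inequality is needed because trees impose no closure constraints on edge lengths.

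For Part~2, I would invoke the fact, built into the definition of interpolation in $\mathcal{S_S}$, that the length coordinates are interpolated linearly, so $\ell_e^{(t)} = (1-t)\ell_e^{(0)} + t \ell_e^{(1)}$ coordinate by coordinate. If $S^{(0)}$ and $S^{(1)}$ are isometric then $\ell_e^{(0)} = \ell_e^{(1)}$ for every edge $e$, forcing $\ell_e^{(t)} = \ell_e^{(0)}$ and hence isometry of $S^{(t)}$ to the endpoints. The non-isometric case is immediate from the same formula.

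For Part~3, I would note that $s^{(t)}$ varies continuously in $t$ entrywise, hence each length coordinate varies continuously, and each SLERP-interpolated spherical-coordinate pair varies continuously; composing with the continuous map $(u,v) \mapsto (\sin v \cos u, \sin v \sin u, \cos v)^T$ yields continuous edge direction vectors. The root position is continuous in $t$ by construction, and an induction on the DFS traversal shows that each vertex position, being a sum of the root and finitely many vectors of the form $\ell_e^{(t)} d_e^{(t)}$ along the root-to-vertex path, is a continuous function of $t$ as well. The main thing to be careful about is that the claim depends on the ordering being the same in $s^{(0)}$, $s^{(1)}$, and $s^{(t)}$, which is ensured by the shared tree structure $T$ and the fixed depth-first ordering chosen once and for all; I expect this bookkeeping to be the only subtle point, and no genuine obstacle beyond it.
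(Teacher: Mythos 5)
Your proposal is correct and matches the paper's intended argument exactly: the paper itself gives no detailed proof, stating only that one adapts the proof of Theorem~\ref{theorem_polygon} by replacing the dual-graph traversal with a depth-first traversal of $T$, which is precisely what you carry out (including the correct observation that no triangle-inequality check is needed). Your write-up is in fact more explicit than what appears in the paper.
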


This theorem allows us to morph isometrically between the skeletons of two shapes corresponding to two postures of the same articulated object in $O(n)$ time. Figure~\ref{skeleton} shows an example of such a morph.

\begin{figure}[htb]
\centering
\begin{tabular}{c c c}
\includegraphics[width = 4.5cm]{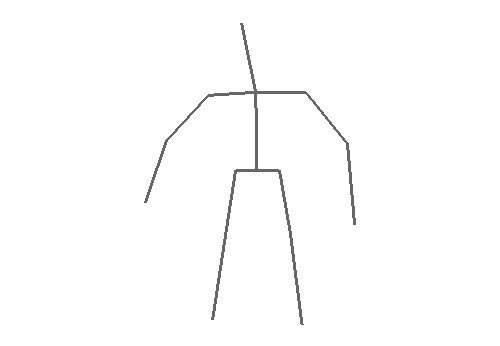} &
\includegraphics[width = 4.5cm]{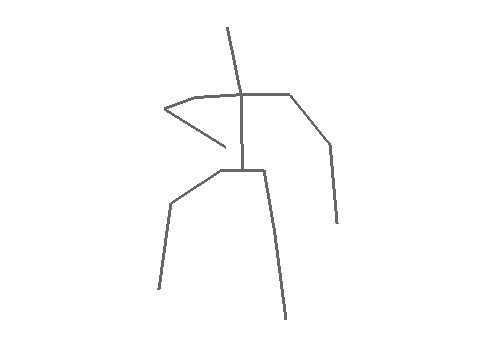} &
\includegraphics[width = 4.5cm]{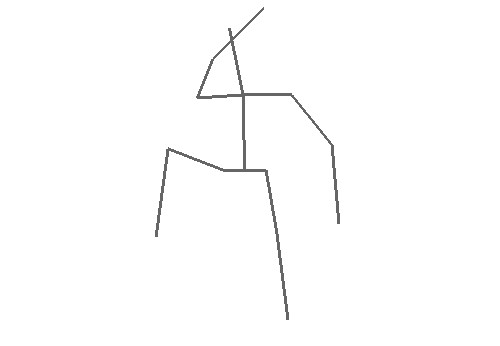} \\
(a) & (b) & (c)\\
\end{tabular}
\caption{\textit{Example of isometric morph between two skeleton poses. Input poses are shown in (a) and (c) and morph for $t=0.5$ is shown in (b).}}
\label{skeleton}
\end{figure}

In the remainder of this paper, we will focus our attention on morphing between triangular meshes.

%-------------------------------------------------------------------------
\section{Generalization to Triangular Meshes}
\label{connected}

This section extends the shape space $\mathcal{S}$ from Section~\ref{polygon} to arbitrary connected triangular meshes. However, we can no longer guarantee the properties of Theorem~\ref{theorem_polygon}, because the dual graph of the triangular mesh $M$ is no longer a tree.

Given two triangular meshes $S^{(0)}$ and $S^{(1)}$ corresponding to two near-isometric poses of the same non-rigid object with known point-to-point correspondence, we know one mesh structure $M$ with two sets of ordered vertex coordinates $V^{(0)}$ and $V^{(1)}$ in $\mathbb{R}^3$. 

As before, we can use this information to find the best rigid alignment in $\mathbb{R}^3$. We do this before representing the shapes in a shape space. As outlined above, this alignment has a major influence on the result of the morph.

As before, we can represent $S^{(0)}$ and $S^{(1)}$ as points $s^{(0)}$ and $s^{(1)}$ in a shape space $\mathcal{S}$ using the same shape space points as in Section~\ref{polygon}. Let $s^{(t)}$ be the linear interpolation of $s^{(0)}$ and $s^{(1)}$ in $\mathcal{S}$, where the linear interpolation is computed as outlined in Section~\ref{polygon}. The existence of a mesh $S^{(t)} \in \mathbb{R}^3$ that has the underlying mesh structure $M$ and that corresponds to $s^{(t)}$ is no longer guaranteed. This can be seen using the example shown in Figure~\ref{example_joe}. Figure~\ref{example_joe}(a) and (b) show two isometric meshes $S^{(0)}$ and $S^{(1)}$. The dual graph $D(M)$ of the mesh structure $M$ is a simple cycle. Note that although the start and the end pose are isometric, we cannot find an intermediate pose that satisfies all of the interpolated normal vectors with SLERP and that is isometric to $S^{(0)}$ and $S^{(1)}$.

\begin{figure}[htb]
\centering
\begin{tabular}{c c}
\includegraphics[width = 1.5cm]{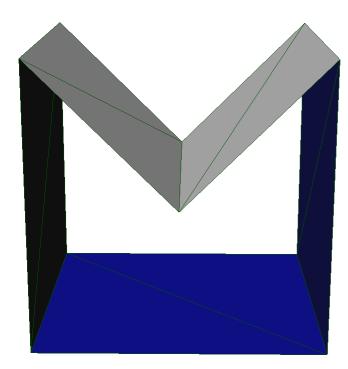} &
\includegraphics[width = 1.5cm]{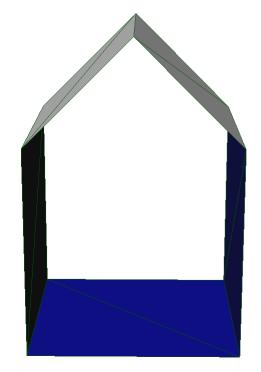} \\
(a) & (b) \\
\end{tabular}
\caption{\textit{Example of isometric triangular meshes where intermediate poses interpolating all normals and edge lengths do not exist.}}
\label{example_joe}
\end{figure}

Let $M$ consist of $n$ vertices. As $M$ is a planar graph, $M$ has $O(n)$ edges and $O(n)$ triangles. The shape space $\mathcal{S}$ is defined using the same shape space points as in Section~\ref{polygon}. The shape space $\mathcal{S}$ has dimension $O(n)$. As before, we interpolate linearly in shape space by interpolating the edge lengths by a simple linear interpolation.

\begin{obs}
Given a triangular mesh $S^{(t)}$ with underlying mesh structure $M$, point $s^{(t)}$ in $\mathcal{S}$ is uniquely determined. However, the inverse operation, that is computing a triangular mesh $S^{(t)}$ given a point $s^{(t)} \in \mathcal{S}$, is ill-defined.
\end{obs}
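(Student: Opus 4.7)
The plan is to treat the two halves of the observation separately, since the forward direction is essentially routine while the inverse direction rests on the failure example in Figure~\ref{example_joe}.

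For the forward direction (a mesh determines a shape space point), I would verify that every coordinate of $s^{(t)}$ is a single-valued geometric function of the mesh data $(M, V^{(t)})$. The first three coordinates read off the recorded position of the distinguished vertex; coordinates four and five are the spherical coordinates of the unit vector along the oriented first incident edge; each edge-length coordinate is the Euclidean distance between the endpoints of the indexed edge; and each pair of normal coordinates is obtained by taking the right-hand-rule cross product of two edges of the indexed triangle in the order fixed by $M$, normalizing, and converting to spherical coordinates. Since each of these is a single-valued function of the input, $s^{(t)}$ is uniquely determined.

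For the inverse direction, I would exhibit a shape space point that has no preimage, using the configuration of Figure~\ref{example_joe}. Let $M$ be that common mesh structure, so that $D(M)$ is a simple cycle. Form the shape space points $s^{(0)}$ and $s^{(1)}$ of the two isometric input meshes and consider the interpolant $s^{(t)}$ for some intermediate $t$. Pick a spanning tree $T$ of $D(M)$ and attempt reconstruction by the tree-traversal procedure from the proof of Theorem~\ref{theorem_polygon}; this produces vertex coordinates realizing every edge length and normal direction read along $T$. The unique non-tree arc of $D(M)$ then imposes an additional geometric constraint: the two triangles sharing its corresponding edge, computed from opposite sides of the cycle, must agree on the position of their common vertex. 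The example in Figure~\ref{example_joe} is designed so that this constraint fails for intermediate $t$, hence no mesh with underlying structure $M$ realizes $s^{(t)}$, and the would-be inverse map is not defined on all of $\mathcal{S}$.

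The main obstacle is making the failure claim for the example rigorous rather than merely visual. To handle this I would parametrize the positional gap at the cycle-closing vertex as a continuous function $g(t)$, observe that $g(0) = g(1) = 0$ by hypothesis on the inputs, and verify by a direct coordinate computation on the specific mesh of Figure~\ref{example_joe} that $g$ is not identically zero; a single sample showing $g(t) \neq 0$ for some $0 < t < 1$ suffices. The remainder is routine bookkeeping comparing the reconstruction along the spanning tree with the constraint imposed by the non-tree arc.
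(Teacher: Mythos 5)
Your proposal matches the paper's own argument: the forward direction is immediate because each shape-space coordinate is a single-valued function of the mesh data, and the inverse direction fails via the counterexample of Figure~\ref{example_joe}, whose dual graph is a cycle that cannot be closed consistently with the interpolated edge lengths and normals. The paper simply asserts the counterexample works, so your plan to quantify the cycle-closing gap $g(t)$ and check it is nonzero for some intermediate $t$ is the same route carried out with more rigor.
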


This is illustrated in Figure~\ref{example_joe}.

To compute a unique triangular mesh $S^{(t)}$ given a point $s^{(t)} \in \mathcal{S}$ that linearly interpolates between $s^{(0)}$ and $s^{(1)}$, such that $S^{(t)}$ approximates the information given in $s^{(t)}$ well, we use the dual graph $D(M)$ of $M$. Unlike in Section~\ref{polygon}, $D(M)$ is not necessarily a tree. Our algorithm therefore operates on a minimum spanning tree $T(M)$ of $D(M)$. The tree $T(M)$ is computed by assigning a weight to each arc $e$ of $D(M)$. The weight of $e$ is equal to the difference in dihedral angle of the supporting planes of the two triangles of $M$ corresponding to the two endpoints of $e$. That is, we compute the dihedral angle between the two supporting planes of the two triangles of $M$ corresponding to the two endpoints of $e$ for the start pose $S^{(0)}$ and for the end pose $S^{(1)}$, respectively. The weight of $e$ is then set as the difference between those two dihedral angles, which corresponds to the change in dihedral angle during the deformation. The weight can therefore be seen as a measure of rigidity. The smaller the weight, the smaller the change in dihedral angle between the two triangles during the deformation, and the more rigidly the two triangles move with respect to each other. As $T(M)$ is a minimum spanning tree, $T(M)$ contains the arcs corresponding to the most rigid components of $M$.

Similar to Section~\ref{polygon}, we compute $S^{(t)}$ by traversing $T(M)$. However, unlike in Section~\ref{polygon}, setting the vertex coordinates of a vertex $v$ of $S^{(t)}$ using two paths from the root of $T(M)$ to two triangles containing $v$ can yield two different resulting coordinates for $v$. An example of this situation is given in Figure~\ref{dual_mst}, where the coordinates of $v$ can be set by traversing the arcs $e_2$, and $e_3$ of $T(M)$ or by traversing the arcs $e_1, e_4$, and $e_5$ of $M$ starting at $root(T(M))$. We call the different coordinates computed for $v$ in $T(M)$ \textit{candidate coordinates} of $v$. Our algorithm computes the coordinates of each vertex $v \in S^{(t)}$ as the average of all the candidate coordinates of $v$.

\begin{figure}[htb]
\centering
\includegraphics[width=5.0cm]{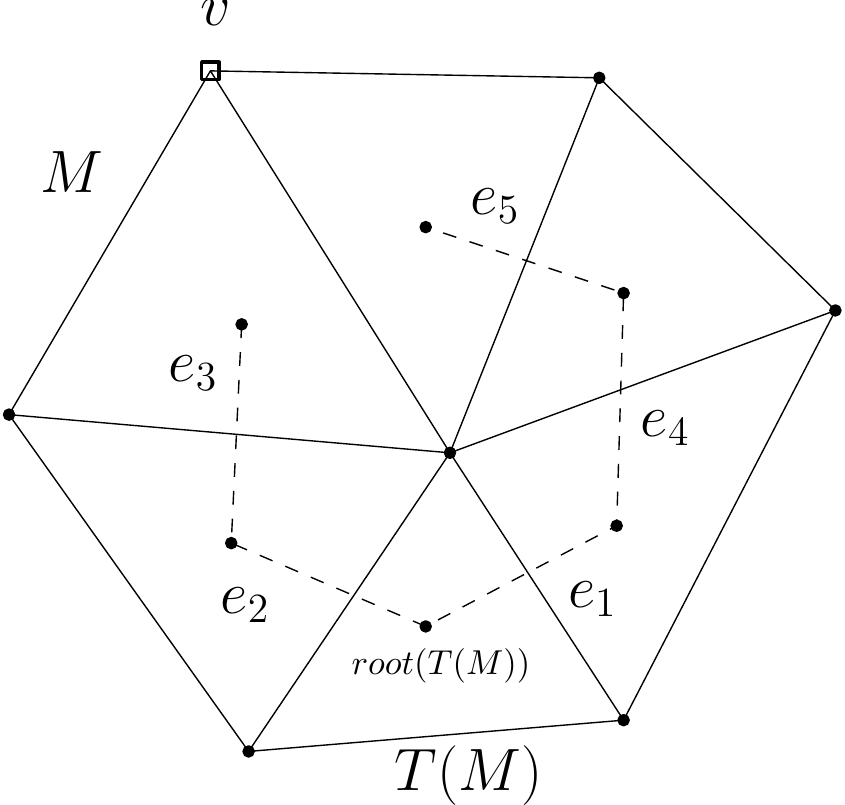}
\caption{\textit{A mesh $M$ with its dual minimum spanning tree $T(M)$.}}
\label{dual_mst}
\end{figure}

Let us analyze the maximum number of candidate coordinates that can occur for a vertex in $S^{(t)}$. Let $e$ denote an edge of $M$ such that $D(e)$ is in $T(M)$. Let $v$ denote the vertex of $S^{(t)}$ opposite $e$ in the triangle corresponding to an endpoint of $D(e)$, such that the coordinates of $v$ are computed when traversing $D(e)$. This situation is illustrated in Figure~\ref{num_candidates}. Let $d_1$ and $d_2$ denote the total number of candidate coordinates of the two endpoints of $e$. By traversing $D(e)$, we compute $d_1d_2$ candidate coordinates for $v$. We can therefore bound the number of candidate coordinates of $v$ computed using the path through $D(e)$ by $d_1d_2$. Note that the number of candidate coordinates for the two endpoints of the first edge is one. Furthermore, each vertex $v$ can be reached by at most $deg(v)$ paths in $T(M)$, where $deg(v)$ denotes the degree of vertex $v$ in $M$. As each path in $T(M)$ has length at most $m-1$, where $m=O(n)$ is the number of triangles of $M$, we can bound the total number of candidate coordinates in $S^{(t)}$ by $\sum_{v \in V} 2^{m-1}deg(v)=2n 2^{m-1}$, where $V$ is the vertex set of $M$.

\begin{figure}[htb]
\centering
\includegraphics[width=4.0cm]{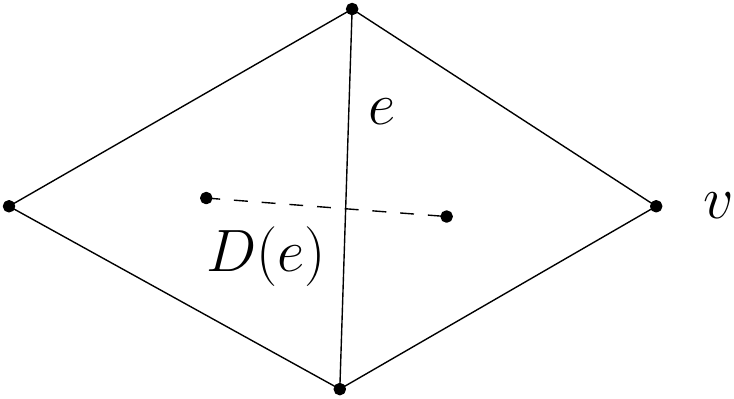}
\caption{\textit{Illustration of how to bound the number of candidate coordinates of $v$ computed using the path through $D(e)$.}}
\label{num_candidates}
\end{figure}

Our algorithm finds a triangular mesh $S^{(t)}$ corresponding to $s^{(t)}$ that is isometric to $S^{(0)}$ and $S^{(1)}$ if such a mesh exists, because all of the candidate coordinates are equal in this case and taking their average yields the desired result. If there is no isometric mesh corresponding to $s^{(t)}$, our algorithm finds a unique mesh that weighs all the evidence given by $T(M)$ equally. By choosing $T(M)$ as a minimum spanning tree based on weights representing rigidity, we allocate rigid parts of the model more emphasis than non-rigid parts. The reason for this is that in most near-isometric morphs, triangles close to non-rigid joints are deformed more than triangles in mainly rigid parts of the model. We conclude with the following Lemma.

\begin{prop}
Let $S^{(0)}$ and $S^{(1)}$ denote two isometric connected triangular meshes and let $s^{(0)}$ and $s^{(1)}$ denote the corresponding shape space points, respectively. We can compute a unique triangular mesh $S^{(t)}$ approximating the information given in the linear interpolation $s^{(t)}, 0\leq t\leq 1$ of $s^{(0)}$ and $s^{(1)}$, in exponential time. We find a triangular mesh $S^{(t)}$ corresponding to $s^{(t)}$ that is isometric to $S^{(0)}$ and $S^{(1)}$ if such a mesh exists.
\end{prop}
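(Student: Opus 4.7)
The plan is to combine the construction already sketched in Section~\ref{connected} with an induction argument. First I would formalize the algorithm: compute $D(M)$, assign each arc the weight equal to the absolute change in dihedral angle between $S^{(0)}$ and $S^{(1)}$, extract a minimum spanning tree $T(M)$, and root it at some triangle $f_0$ whose three vertex positions are fixed from the first-vertex, first-edge, and first-normal coordinates of $s^{(t)}$ together with the interpolated edge lengths of $f_0$, exactly as in the proof of Theorem~\ref{theorem_polygon}. Enumerate every root-to-triangle path in $T(M)$; each arc crossing $D(e)$ invokes the triangle-closing construction (two already-placed endpoints of the shared edge $e$, the two interpolated side lengths of the new triangle, and its interpolated outer normal uniquely determine the opposite vertex), producing one candidate coordinate. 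The output of the algorithm is, for each vertex $v$ of $M$, the arithmetic mean of all candidates generated for $v$. Uniqueness of the output is immediate, since $T(M)$, the enumeration of candidates, and the arithmetic mean are deterministic functions of $s^{(t)}$.

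The complexity claim then follows from the bound $\sum_{v \in V} 2^{m-1}\deg(v) = 2n\cdot 2^{m-1}$ on the total number of candidates established just before the proposition: each candidate is produced in $O(1)$ time by a triangle-closing construction and the averaging is linear in the number of candidates, so the total running time is exponential in $n$. For the isometric-case claim, I would assume the existence of a realization $\tilde{S}^{(t)}$ whose edge lengths and outer normals match $s^{(t)}$, and prove by induction on the depth of the current triangle in $T(M)$ that every candidate computed for each vertex $v$ equals the position of $v$ in $\tilde{S}^{(t)}$. The base case handles the root triangle, which is placed in exactly one way. In the inductive step, when the traversal crosses an arc $D(e)$ of $T(M)$ to a child triangle $f'$, the two endpoints of $e$ already lie at their $\tilde{S}^{(t)}$ positions by hypothesis, the interpolated side lengths and outer normal of $f'$ agree with those of $\tilde{S}^{(t)}$ by assumption, and the unique triangle-closing solution therefore puts the third vertex in its $\tilde{S}^{(t)}$ location. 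All candidates for $v$ then coincide with a single point, their average is that same point, and the algorithm reconstructs $\tilde{S}^{(t)}$.

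The main obstacle I expect is stating the triangle-closing step precisely enough to make the induction airtight: one has to verify that, given two vertex positions together with the two remaining edge lengths of a triangle and its outer unit normal, there is a unique third vertex consistent with the right-hand rule. This is the same determinacy argument used in the proof of Theorem~\ref{theorem_polygon} and should transfer verbatim, after which the remainder of the proof, namely enumerating candidates, averaging them, and verifying agreement with $\tilde{S}^{(t)}$, is bookkeeping.
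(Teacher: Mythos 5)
Your proposal follows essentially the same route as the paper, which justifies the proposition in the prose immediately preceding it: build the dual graph, take the minimum spanning tree weighted by change in dihedral angle, propagate candidate coordinates by the triangle-closing construction from Theorem~\ref{theorem_polygon}, bound the number of candidates by $\sum_{v\in V}2^{m-1}\deg(v)=2n\,2^{m-1}$, average, and note that all candidates coincide when a mesh realizing $s^{(t)}$ exists. Your explicit induction on tree depth is a slightly more careful rendering of the paper's one-line justification of that last point, but it is not a different argument.
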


The algorithm can easily be extended to work for a non-connected triangular mesh $M$ by removing rigid transformations for each connected component of $M$ using local coordinate systems. We can then adapt the algorithm by finding the dual graph $D(M)$ and a minimum spanning tree $T(M)$ for each connected component of $M$. With this information, we can traverse the graph as described above. We denote this the \textit{exponential algorithm} in the following.

%-------------------------------------------------------------------------
\section{Efficient Algorithm to Deform Triangular Meshes}
\label{efficient}

As the exponential algorithm is limited for pragmatic reasons to triangular meshes with few vertices, this section describes a more computationally efficient algorithm to find the deformed poses. 

To compute a unique triangular mesh $S^{(t)}$ given a point $s^{(t)} \in \mathcal{S}$ that linearly interpolates between $s^{(0)}$ and $s^{(1)}$, such that $S^{(t)}$ approximates the information given in $s^{(t)}$ well, we use the minimum spanning tree $T(M)$ of the dual graph $D(M)$ of $M$ as before.
An approach that reduces the total number of candidate coordinates of each vertex of $S^{(t)}$ to $O(n)$ is to restrict to one the number of times each edge of $T(M)$ can be traversed. We traverse $T(M)$ in depth-first order. When an edge $D(e)$ is traversed, we add candidate coordinates to one vertex $v$ as shown in Figure~\ref{num_candidates}. However, we only add at most a linear number of candidate coordinates to $v$ as described below. Denote the vertices of $e$ by $v_0(e)$ and $v_1(e)(e)$, denote the number of candidate coordinates that were added to $v_0(e)$ and $v_1(e)$, respectively, during the traversal of $T(M)$ before traversing the edge $D(e)$ by $d_1$ and $d_2$, and let $v_0(e)$ be the vertex of $e$ that was updated more recently in the traversal of $T(M)$. Let the candidate coordinates of $v_0(e)$ ($v_1(e)$, respectively) be given by $c_1^{1}, \ldots, c_1^{d_1}$ ($c_2^{1}, \ldots, c_2^{d_2}$, respectively) ordered from the least recently to the most recently added candidate coordinate. When traversing $D(e)$, we add $d_2$ candidate coordinates to $v$ by computing coordinates of $v$ based on the candidate pairs $(c_1^{d_1}, c_2^{1}), \ldots, (c_1^{d_1}, c_2^{d_2})$.

This strategy computes $O(n)$ candidates per vertex of $S^{(t)}$, and hence a total of $O(n^2)$ candidates, thereby avoiding the computation of an exponential number of candidate coordinates. To find the final coordinate of a vertex $v$, we average all of the candidate coordinates of $v$. We denote this the \textit{averaging algorithm} in the following.

The averaging approach was found to yield satisfactory results in all test cases. Just as in the algorithm by Kilian et al.~\cite{kilian_2007_gmss}, the results depend heavily on the initial rigid alignment of the shapes.

Furthermore, the results of the averaging algorithm depend on the order of the vertices and edges in $M$. Recall that the first vertex and its first incident edge are known. As we limit the number of times an edge of $T(M)$ can be traversed, not all of the information is propagated through the graph. Hence, the first vertex and the first edge influence the final result. We can eliminate this dependence at the cost of a higher running time by taking $O(n)$ shape spaces; one with each possible oriented edge as first edge. The running time of this algorithm is $O(n^3)$, which is too high for practical applications. We therefore can also use a smaller number of first edges that are found using Voronoi sampling~\cite{eldar_lindenbaum_porat_zeevi_94_farthest_point_image_sampling}. If $O(\log n)$ samples are used, the running time of the algorithm becomes $O(n \log n)$ and the influence of the first edge is expected to be low. We do not include experiments obtained using these extensions of the approach as the averaging algorithm yields similar results as the less efficient algorithms in all of our experiments.

%-------------------------------------------------------------------------
\section{Experiments}
\label{experiments}

This section presents experiments using the algorithms presented in this paper. We show exact morphs of triangulated $3D$ polygons, morphs of models with few vertices computed using the exponential algorithm, and morphs of larger triangular meshes using the averaging algorithm.

The experiments were conducted using an implementation in C++ on an Intel (R) Pentium (R) D with 3.5 GB of RAM. OpenMP was used to improve the efficiency of the algorithms. To compute the minimum spanning tree $T(M)$, the boost graph library~\cite{boost} was used. 

\subsection{Deforming Triangulated $3D$ Polygons}

We demonstrate the efficient polygon algorithm using the simple polygon shown in Figure~\ref{step_deformation}. We deform the polygon shown in Figure~\ref{step_deformation} (a) to the polygon shown in Figure~\ref{step_deformation} (i). The morph is illustrated in Figures~\ref{step_deformation} (b)-(h). All of the intermediate poses are isometric to the start and end poses. The overlayed poses are shown in Figure~\ref{full_deformation}. The running time of this example is less than 1 second.

\begin{figure}[htb]
\centering
\includegraphics[width=5.0cm]{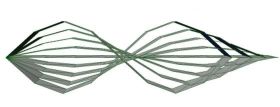}
\caption{\textit{Isometric morph of a simple polygon. The start polygon is a $3D$ polygon obtained by discretizing the curve $y=\sin(x)$ and by adding thickness to the curve along the $z$-direction. The end polygon is similarly obtained from $y = -\sin{x}$.}}
\label{full_deformation}
\end{figure}

\begin{figure}[htb]
\centering
\begin{tabular}{c c c c c c c c c}
\includegraphics[width = 1.3cm]{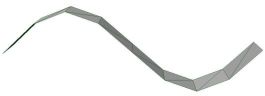} &
\includegraphics[width = 1.3cm]{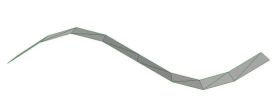} &
\includegraphics[width = 1.3cm]{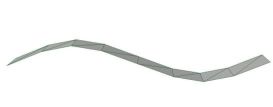} &
\includegraphics[width = 1.3cm]{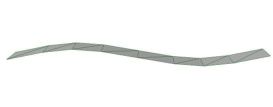} &
\includegraphics[width = 1.3cm]{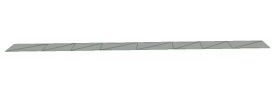} &
\includegraphics[width = 1.3cm]{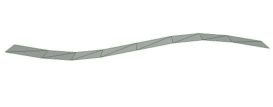} &
\includegraphics[width = 1.3cm]{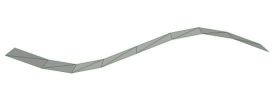} &
\includegraphics[width = 1.3cm]{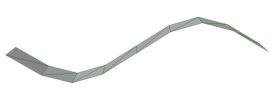} &
\includegraphics[width = 1.3cm]{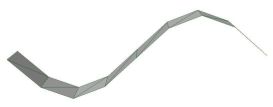} \\
(a) & (b) & (c) & (d) & (e) & (f) & (g) & (h) & (i) \\
\end{tabular}
\caption{\textit{Isometric morph of a simple polygon from pose (a) to pose (i). Intermediate poses obtained using the polygon algorithm are given from left to right.}}
\label{step_deformation}
\end{figure}

\subsection{Deforming General Triangular Meshes}

We present experimental results for the exponential algorithm and the averaging algorithm. First, we run the exponential algorithm on one model with few vertices to demonstrate the quality of the results. The model we use to test the approach is shown in Figure~\ref{step_deformation_ex_joe}. We aim to smoothly and isometrically deform the pose shown in Figure~\ref{step_deformation_ex_joe}(a) to the pose shown in Figure~\ref{step_deformation_ex_joe}(i). As mentioned above, there is no isometric deformation between the poses that interpolates the triangle normals. The result of our algorithm is shown in Figures~\ref{step_deformation_ex_joe}(b)-(h). Note that all triangle normals are interpolated and the symmetry of the model is preserved. Furthermore, all edge lengths with the exception of the edges of the four top faces are interpolated.

\begin{figure}[htb]
\centering
\begin{tabular}{c c c c c c c c c}
\includegraphics[width = 1.3cm]{ExampleJoe/pose1.jpg} &
\includegraphics[width = 1.3cm]{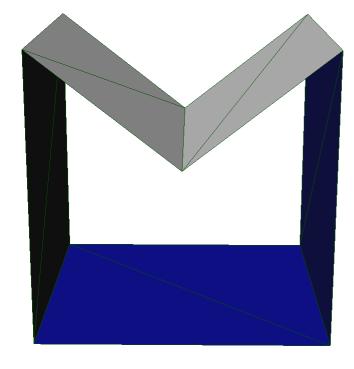} &
\includegraphics[width = 1.3cm]{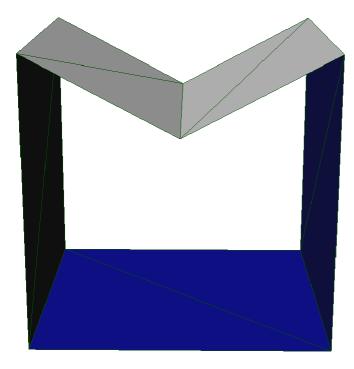} &
\includegraphics[width = 1.3cm]{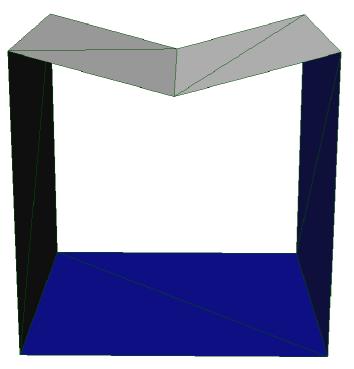} &
\includegraphics[width = 1.3cm]{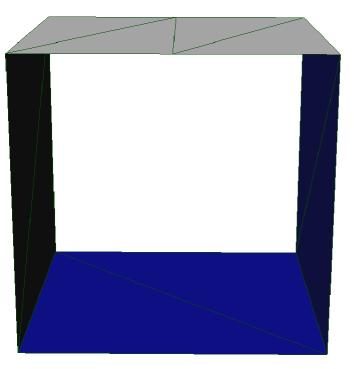} &
\includegraphics[width = 1.3cm]{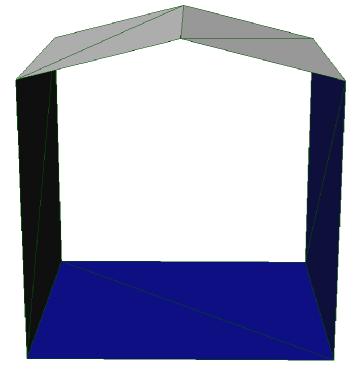} &
\includegraphics[width = 1.3cm]{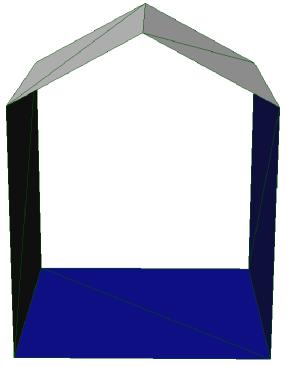} &
\includegraphics[width = 1.3cm]{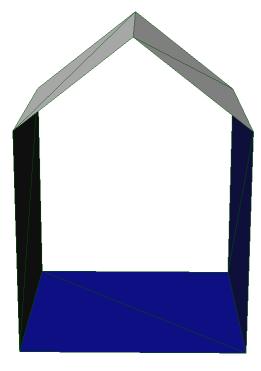} &
\includegraphics[width = 1.3cm]{ExampleJoe/pose9.jpg} \\
(a) & (b) & (c) & (d) & (e) & (f) & (g) & (h) & (i) \\
\end{tabular}
\caption{\textit{Isometric morph of a cycle from pose (a) to pose (i). Intermediate poses obtained using the exponential algorithm are given from left to right.}}
\label{step_deformation_ex_joe}
\end{figure}

Second, we demonstrate the quality and efficiency of the averaging algorithm. The first experiment morphs between animated poses of the armadillo model. The models are chosen from the AIM@SHAPE repository~\footnote{http://shapes.aimatshape.net/releases.php}. The models contain $331904$ triangles and $165954$ vertices. Testing our algorithms on these models emphasizes the practical use of our method. The results are shown in Figure~\ref{armadillo}. Note that the intermediate poses are visually pleasing. Although some small bumps appear on the right arm of the armadillo in Figure~\ref{armadillo} (b) and on the left arm of the armadillo in Figure~\ref{armadillo} (e), most details are preserved during the morph.

\begin{figure}[htb]
\centering
\begin{tabular}{c c c}
\includegraphics[height = 4.5cm]{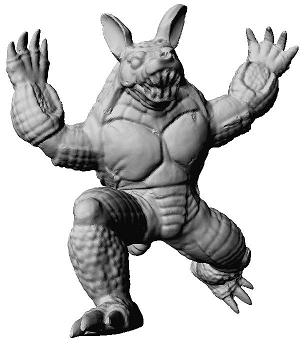} &
\includegraphics[height = 4.5cm]{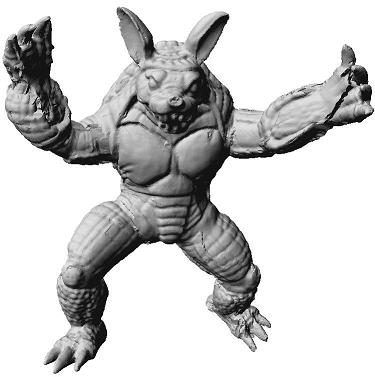} &
\includegraphics[height = 4.5cm]{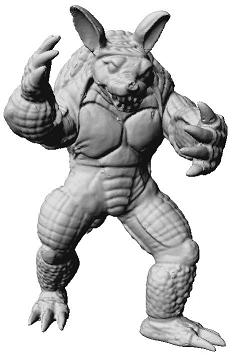} \\
(a) & (b) & (c)\\
\includegraphics[height = 4.5cm]{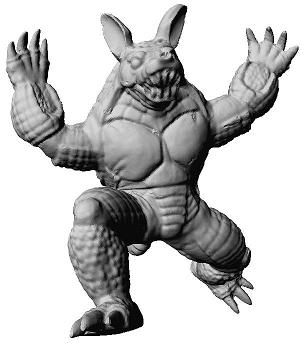} &
\includegraphics[height = 5.5cm]{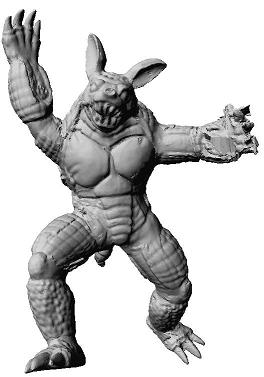} &
\includegraphics[height = 6.0cm]{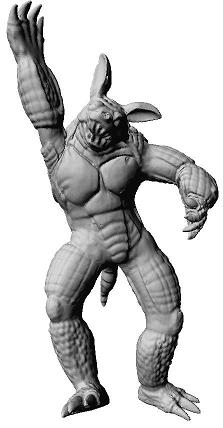} \\
(d) & (e) & (f)\\
\end{tabular}
\caption{\textit{Isometric morph of the armadillo model from pose (a) to pose (c) and from pose (d) to pose (f). The intermediate poses obtained using the averaging algorithm for $t=0.5$ are given in figures (b) and (e).}}
\label{armadillo}
\end{figure}

Figure~\ref{human} shows the morph between two poses of a model of a human being. The deformed pose of the model was found using the automatic technique by Baran and Popovi\'{c}~\cite{baran_popovic_07_animation}. The models contain $10002$ vertices. The two given meshes are shown in Figures~\ref{human}(a) and (i). Intermediate poses using the averaging algorithm are shown in Figures~\ref{human}(b) to (h). Note that although small bumps appear on the leg of the morphed surface, the morphs are intuitive. 

\begin{figure}[htb]
\centering
\begin{tabular}{c c c c c c c c c}
\includegraphics[height = 4.5cm]{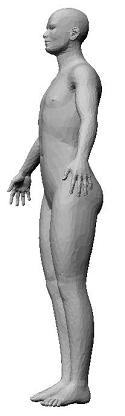} &
\includegraphics[height = 4.5cm]{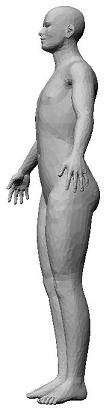} &
\includegraphics[height = 4.5cm]{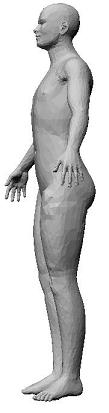} &
\includegraphics[height = 4.5cm]{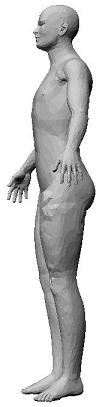} &
\includegraphics[height = 4.5cm]{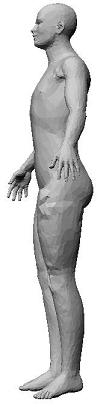} &
\includegraphics[height = 4.5cm]{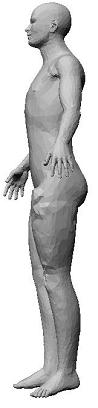} &
\includegraphics[height = 4.5cm]{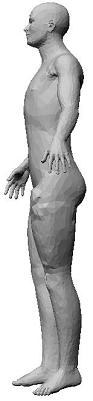} &
\includegraphics[height = 4.5cm]{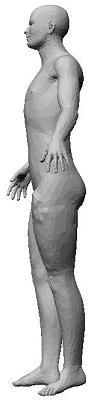} &
\includegraphics[height = 4.5cm]{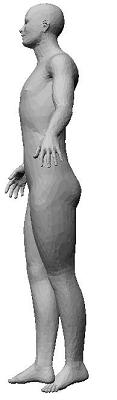} \\
(a) & (b) & (c) & (d) & (e) & (f) & (g) & (h) & (i) \\
\end{tabular}
\caption{\textit{Isometric morph of a human from pose (a) to pose (i). Intermediate poses obtained using the averaging algorithm are given from left to right.}}
\label{human}
\end{figure}

Furthermore, we morph between heads from the CAESAR database~\cite{robinette_daanen_paquet_99_caesar}. The correspondence between the two head models was found using the approach by Xi et al.~\cite{xi_lee_shu_07_bodies}. The models contain 11102 vertices. The two given meshes are shown in Figures~\ref{head}(a) and (i). Intermediate poses obtained using the averaging algorithm are shown in Figure~\ref{head} (b) to (h). Note that the morphs are visually pleasing. 

\begin{figure}[htb]
\centering
\begin{tabular}{c c c c c c c c c}
\includegraphics[width = 1.3cm]{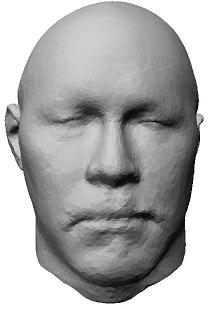} &
\includegraphics[width = 1.3cm]{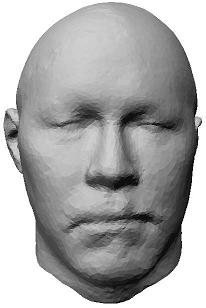} &
\includegraphics[width = 1.3cm]{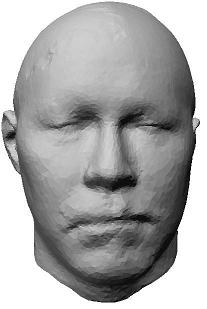} &
\includegraphics[width = 1.3cm]{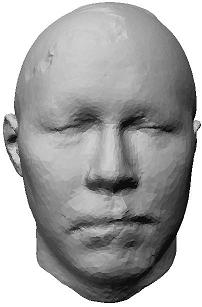} &
\includegraphics[width = 1.3cm]{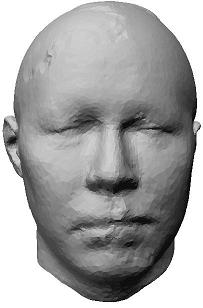} &
\includegraphics[width = 1.3cm]{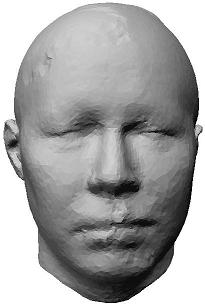} &
\includegraphics[width = 1.3cm]{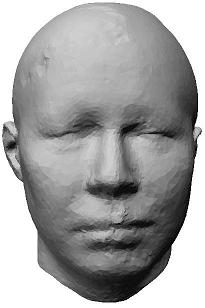} &
\includegraphics[width = 1.3cm]{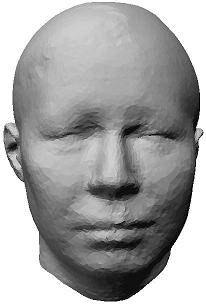} &
\includegraphics[width = 1.3cm]{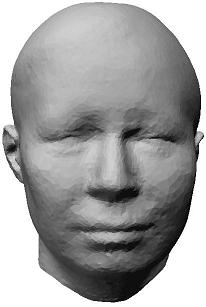} \\
(a) & (b) & (c) & (d) & (e) & (f) & (g) & (h) & (i) \\
\end{tabular}
\caption{\textit{Isometric morph of a head from pose (a) to pose (i). Intermediate poses obtained using the averaging algorithm are given from left to right.}}
\label{head}
\end{figure}

Finally, we morph between a few poses of the Alien model shown in Figure~\ref{aliens_inv_avg}. The Alien model is chosen from the Princeton Shape Benchmark\footnote{http://shape.cs.princeton.edu/benchmark/} and animated to obtain multiple postures with known correspondences using the automatic technique by Baran and Popovi\'{c}~\cite{baran_popovic_07_animation}. All of the Alien models contain $429$ vertices. The results are shown in Figure~\ref{aliens_inv_avg} and we can see that all of the morphs are visually pleasing. 

\begin{figure}[htb]
\centering
\includegraphics[width = 8.0cm]{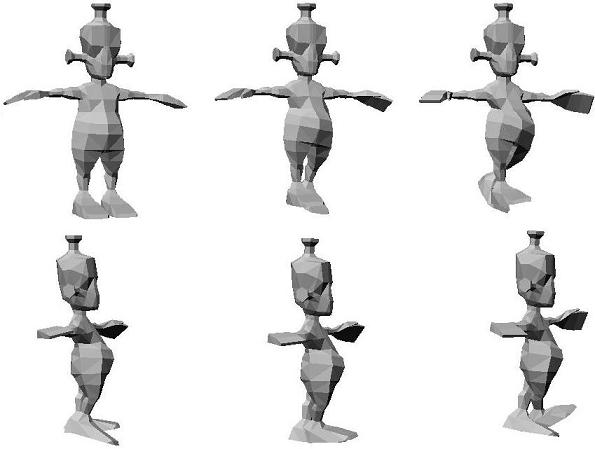} 
\caption{\textit{The two rows show start and end poses on the left and right, respectively. The intermediate poses shown in the middle columns are interpolations for $t=0.5$ obtained using the averaging algorithm. }}
\label{aliens_inv_avg}
\end{figure}

For all of the experiments conducted, we measured the time efficiency to compute one intermediate pose at $t=0.5$ and the energy 
$$Q = \sum_{e\in E} (\left\|v_0(e)-v_1(e)\right\| - l(e))^2,$$ 
where $E$ is the set of edges of $M$, $v_0(e)$ and $v_1(e)$ are the vertices of $e$ in $S^{(t)}$, $\left\|a, b\right\|$ denotes the Euclidean distance between $a$ and $b$, and $l(e)$ is the desired length for $e$ stored in $s^{(t)}$. The running times and the quality of the results are summarized in Table~\ref{morphing_table}. Note that the developed implementation is non-optimized and experimental. The running times might be improved by implementing some of the algorithms on the GPU.

\begin{table}
\centering
\begin{tabular}{|l|r|r|r|r|r|r|}
\hline
&Alien Row 1 & Alien Row 2& Human & Head & Armadillo Row 1 & Armadillo Row 2\\
\hline
$n$& 429 & 429& 10002 & 11102 &165954 &165954\\
\hline
time (sec)& $< 1$& $< 1$& 8& 10& 1813&1763\\
\hline
$Q$& 0.000719& 0.000064& 0.042674& 0.003583&12277.92 &70637.80\\
\hline
\end{tabular}
\caption{\textit{Summary of the experimental results.}}
\label{morphing_table}
\end{table}

%-------------------------------------------------------------------------
\section{Conclusion}

We presented an approach to morph efficiently between isometric poses of triangular meshes in a novel shape space. The main advantage of this morphing method is that the most isometric morph is always found in linear time when triangulated $3D$ polygons are considered. For general triangular meshes, the approach cannot be proven to find the optimal solution. However, this paper presents an efficient heuristic approach to find a morph for general triangular meshes that does not depend on solving a non-linear optimization problem.

The presented experimental results demonstrate that the heuristic approach yields visually pleasing results. The approach is not invariant with respect to the order of the vertices of the mesh, but can be modified to have this property at the cost of a higher running time. 

An interesting direction for future work is to find an efficient way of morphing triangular meshes while guaranteeing that no self-intersections occur. For polygons in two dimensions, this problem was solved using an approach based on energy minimization~\cite{iben_obrien_demaine_06_refolding_polygons}.

%-------------------------------------------------------------------------
\section*{Acknowledgments}
We thank Martin Kilian for sharing his insight on the topic of shape spaces with us. We thank Pengcheng Xi for providing us the data for the head experiment.

%-------------------------------------------------------------------------
% Bibliography and index

\end{document}